\newcommand{\listingsttfamily}{\fontfamily{SourceCodePro-TLF}\small}
\lstdefinestyle{c}{
    language=C,
    morekeywords={assert, true, false, then, fi, FOR, ROF, DIV0, DIV1, DIV, PRED0, PRED1, P0, P1, MUL1, MUL0, SUCC0, SUCC1, S0 ,S1, FOR01, ROF01, SAFE, IF, IF_ERR, ELSE, SKIP, CHECKPOINT, ROLLBACK, NORMAL, PUSH, POP},  
    backgroundcolor=\color{white},   
    basicstyle=\footnotesize\listingsttfamily,
    breakatwhitespace=false,         
    breaklines=true,                 
    belowcaptionskip=.5\baselineskip, 
    captionpos=b,                    
    commentstyle=\color{purple!40!black},
    deletekeywords={...},            
    escapeinside={\%*}{*)},          
    extendedchars=true,              
    identifierstyle=\color{blue},
    firstnumber=0,                
    stepnumber=1,                
    frame=trbl,
    keepspaces=true,                 
    keywordstyle=\bfseries\color{green!35!black},
    numbers=none,                    
    numbersep=10pt,                   
    numberstyle=\tiny, 
    rulecolor=\color{black},         
    showspaces=false,                
    showstringspaces=false,          
    showtabs=false,                  
    stepnumber=1,                    
    stringstyle=\color{orange},
    tabsize=2,	                   
    title=\lstname,                   
    xleftmargin=\parindent,
    }
\newcommand{\bijection}{{\normalfont\textsf{Bijection}}\xspace}
\newcommand{\Stack}{\mathcal{S}}
\newcommand{\scons}{\!::\!}
\newcommand{\spop}{\operatorname{pop}}
\newcommand{\spush}{\operatorname{push}}
\newcommand{\shead}{\operatorname{head}}
\newcommand{\stail}{\operatorname{tail}}
\newcommand{\sempty}{\operatorname{empty}}
\newcommand{\sreverse}{\operatorname{reverse}}
\newcommand{\twoway}{{\normalfont\textsf{Two-way}}\xspace}
\newcommand{\twowayb}{{\normalfont\textsf{Two-way bijection}}\xspace}
\newcommand{\twowaybs}{{\normalfont\textsf{Two-way bijections}}\xspace}
\newcommand{\onewayf}{{\normalfont\textsf{One-way function}}\xspace}
\newcommand{\onewayfs}{{\normalfont\textsf{One-way functions}}\xspace}
\newcommand{\honest}{{\normalfont\textsf{Honest}}\xspace}
\newcommand{\ForNo}{{\normalfont\textsf{ForNo}}\xspace}
\newcommand{\ICC}{{\normalfont\textsf{ICC}}\xspace}
\newcommand{\RBS}{{\normalfont\textsf{RBS}}\xspace}
\newcommand{\ForNoS}{{\normalfont\textsf{ForNo}}\xspace}
\newcommand{\FPTIME}{{\normalfont\textsf{FPTIME}}\xspace}
\newcommand{\NPTIME}{{\normalfont\textsf{NPTIME}}\xspace}
\newcommand{\PTIME}{{\normalfont\textsf{PTIME}}\xspace}
\newcommand{\TM}{{\normalfont\textsf{TM}}\xspace}
\newcommand{\lComma}{\lstiC{;}}
\newcommand{\blank}{{\normalfont \texttt{\char32}}}
\newcommand{\lstiC}[1]{\mbox{\normalfont\lstinline[style=c,mathescape=true]|#1|}}
\newcommand{\llangle}{\langle\!\langle}
\newcommand{\rrangle}{\rangle\!\rangle}
\newcommand\yield[1]{\mathrel{\stackrel{\makebox[0pt]{\mbox{\tiny #1}}}{\rightsquigarrow}}}
\title{Towards a Characterization of Two-way Bijections \\
      in a Reversible Computational Model}
\author{Matteo Palazzo \and Luca Roversi}
\date{}
\begin{document}

\maketitle

\begin{center}
    \textit{Dipartimento di Informatica, Università di Torino} \\
    \texttt{matteo.palazzo@unito.it, luca.roversi@unito.it}
\end{center}

\begin{abstract}
We introduce an imperative, stack-based, and reversible computational model that characterizes \twowaybs both implicitly, concerning their computational complexity, and with zero-garbage.

\end{abstract}

\bigskip

\theoremstyle{plain}
\newtheorem{theorem}{Theorem}
\newtheorem{lemma}{Lemma}
\newtheorem{corollary}{Corollary}

\theoremstyle{definition}
\newtheorem{definition}{Definition}

\theoremstyle{remark}
\newtheorem{remark}{Remark}
\newtheorem{note}{Note}

\section{Introduction}
\label{section:new Introduction}

We start by recalling that if the class of \onewayfs exists \cite{arora2009computational,robshaw2011oneway}, each of its members would efficiently compute their output while making it computationally infeasible to reverse-engineer the corresponding input. This would mean that \onewayfs are a formal tool on which to base secure encryption \cite{goldreich2008foundations} and, in fact, showing the existence of \onewayfs would imply \PTIME $\neq$ \NPTIME. So, conversely, proving the non-existence of \onewayfs would dismantle many cryptographic assumptions, forcing to rethink how to achieve security.
\par
On the other hand, we recall that \twowaybs establish a domain where feasible invertibility holds. Consequently, any characterization of \twowaybs can help to delineate the space where \onewayfs \emph{should fail} to exist because a bijective function efficiently computable in both directions would inherently disqualify itself from being a \onewayf. So,

\paragraph{Motivations.}
Characterizing \twowaybs through a computational model could reveal structural properties of such bijections that extend beyond their mere definition. This perspective may help refine the class of functions where \onewayfs \emph{could} potentially emerge.

\paragraph{Contributions.}
We introduce \ForNo, an imperative, stack-based, and reversible computational model that we prove to characterize \twowaybs both implicitly and with zero-garbage.

By ``\emph{imperative and stack-based}'' we mean that \ForNo assigns a stack (of natural numbers)  to every of its registers.

By ``\emph{reversible}'' we mean that \ForNo is defined so that if any term $T$ produces a final configuration $c_f$, from an initial one $c_i$, then the inverse $-T$ of $T$ belongs to \ForNo as well and generates $c_i$ from $c_f$. Moreover, $-T$ is obtained from $T$ by a syntax-directed map.

By ``\emph{characterizes \twowaybs implicitly and with zero-garbage}'' we mean that \ForNo enjoys the following properties.
We prove that \ForNo is \FPTIME-sound and \FPTIME-complete.
Completeness ensures that every \FPTIME function can be computed in \ForNo. Soundness guarantees that if a function is computable in \ForNo then it is in \FPTIME.
We remark that \ForNo is \emph{implicitly} \FPTIME-sound because it follows Implicit Computational Complexity (\ICC) principles \cite{DalLago_2022}. That is, the terms in \ForNo adhere to syntactic constraints that inherently restrict computation length, eliminating reliance on ``external clocks'', unlike \PTIME Turing Machines.
As a corollary, \ForNo characterizes \twowaybs with zero-garbage for the two following reasons:
(i) \FPTIME-soundness and reversibility guarantees that if \ForNo can compute a bijection $f$ with zero-garbage, then both $f$ and $f^{-1}$ are in \FPTIME, which means  that $f$ is a \twoway\ \bijection;
(ii) \FPTIME-completeness guarantees that if $f$ is a \twoway \ \bijection, which implies that $f$ and $f^{-1}$ are in \FPTIME, then two terms $T_f$ and $T_{f^{-1}}$ in \ForNo exist that compute them, respectively.
Therefore, the Bennett Trick applied to $T_f$ and $T_{f^{-1}}$ allows us to build a term in \ForNo that computes $f$ with zero-garbage.

\paragraph{Related work.}
To the best of our knowledge, Kristiansen's reversible computational model \RBS in \cite{DBLP:journals/scp/Kristiansen22} is the first -- and possibly the only -- study linking \ICC with reversible computational models . However, \RBS characterizes \PTIME \emph{decision problems} and inherently \emph{reproduces part of the input as output}, meaning it is not zero-garbage. Thus, as far as we can see, \RBS cannot characterize the \twowaybs as seamlessly as \ForNo.

\paragraph{``Index''.}
Section~\ref{section:Normal iteration on notation} introduces \ForNo, its big-step operational semantics, and says why \ForNo is reversible.
Section~\ref{section:ForNo can compute all PTIME functions} and~\ref{section: ForNo is Correct w.r.t. FP} prove \ForNo \ \FPTIME-completeness and soundness.
Section~\ref{section:ForNo characterizes twoways bijections} proves that \ForNo characterizes \twowaybs.

\section{\textsf{ForNo} and its Operational Semantics}
\label{section:Normal iteration on notation}
Let us start defining \emph{Raw terms} as given by the the following grammar $\mathcal{T}$:
\begin{align*}
	T & ::= A \mid
                T \lComma T \mid
                \lstiC{IF}\  x \ \lstiC{=} \ n \ \{T\} \mid
                \lstiC{NORMAL} \ N \ \{S\}
&
    N & ::= y \mid y \lstiC{,} N
    \\
	A & ::=  \lstiC{SKIP} \mid
	\lstiC{PUSH[}n\lstiC{]} \ x
	\mid \lstiC{POP[}n\lstiC{]} \ x
    \\
    S & ::=  A
        \mid S \lstiC{;} S
	\mid \lstiC{IF}\  x \ \lstiC{=} \ n \ \{S\}
	\mid \lstiC{FOR}\ x\ C
	\mid \lstiC{ROF}\ x\ C
&
    C & ::= \{ S \} \mid \{ S \}\ C \enspace ,
\end{align*}
with $T$ its start symbol, and $n \in \mathbb{N}$. Registers are \lstiC{x}, \lstiC{y}, \lstiC{z} \ldots in a set $\mathcal{R}$, ranged over by meta-variables $x, y \ldots$.
From $A$ we get \emph{Atomic terms}.
The \emph{Sequential composition} is ``\lComma'', and \emph{selection} is \lstiC{IF}.
A \emph{normal block} is $\lstiC{NORMAL}\ N \ \{S\}$ with $N$ a `\lstiC{,}'-separated list of ``\emph{normal w.r.t. $S$\,}'' registers.
From $S$ we get \emph{safe terms}, the \emph{only ones} that can include \textit{forward/backward (finite) iterations} \lstiC{FOR} and \lstiC{ROF}, respectively.
Both \lstiC{FOR} and \lstiC{ROF} apply to a register and a list of \emph{bodies} $\{S\}$.
Finally,  we say ``\lstiC{PUSH}/\lstiC{POP} \emph{writes} $x$'',
``$x$ \emph{leads} \lstiC{IF}/\lstiC{FOR}/\lstiC{ROF}'', and
``$T$ and $S$ are a body'' in relation to selections, normal blocks, and forward/backward iterations.

\paragraph{Syntax of \ForNo.} \ForNo contains the Raw Terms generated from the rule $T$ subject to the following proviso:
(i) if $x$ leads a selection or an iteration with bodies $S_1, \ldots, S_n$, then $x$ must be \emph{read-only} in the bodies, that is no atomic term in $S_1, \ldots, S_n$ can write $x$;
(ii) registers that are normal w.r.t. a body $S$ must be read-only in $S$ and are the only ones that can lead iterations in $S$.

\begin{figure}[h!]
	\centering
	\begin{tabular}{c}
		\bottomAlignProof
		\AxiomC{}
		\RightLabel{\scriptsize \textsc{skip}}
		\UnaryInfC{$\langle \lstiC{SKIP}, \omega \rangle \Downarrow \omega$}
		\DisplayProof
        \qquad
        \bottomAlignProof
		\AxiomC{$\langle T, \omega \rangle \Downarrow \omega'$}
        \AxiomC{$\langle U, \omega' \rangle \Downarrow \omega''$}
		\RightLabel{\scriptsize \textsc{seq}}
		\BinaryInfC{$\langle T\lstiC{;}U, \omega \rangle \Downarrow \omega''$}
		\DisplayProof
		\\ \\
		\bottomAlignProof
		\AxiomC{$\spop_n(\phi(x), c) = (s, c')$}
		\RightLabel{\scriptsize \textsc{pop}}
		\UnaryInfC{$\langle \lstiC{POP[}n\lstiC{]} \ x, (\phi, c) \rangle \Downarrow (\phi[x\to s],c')$}
		\DisplayProof
		\quad
		\bottomAlignProof
		\AxiomC{$\spush_n(\phi(x), c) = (s, c')$}
		\RightLabel{\scriptsize \textsc{push}}
		\UnaryInfC{$\langle \lstiC{PUSH[}n\lstiC{]}\ x, (\phi, c) \rangle \Downarrow (\phi[x\to s],c')$}
		\DisplayProof
		\\ \\
		\bottomAlignProof
		\AxiomC{$\shead(\phi(x)) = n \!\!$}
		\AxiomC{$\!\! \langle P, (\phi,c) \rangle \Downarrow \omega'$}
		\RightLabel{\scriptsize \textsc{if-eq}}
		\BinaryInfC{$\langle \lstiC{IF}\ x \lstiC{=} n \ \{P\}, (\phi,c) \rangle \Downarrow \omega'$}
		\DisplayProof
		\qquad
		\bottomAlignProof
		\AxiomC{$\sempty(\phi(x)) \lor \shead(\phi(x)) \neq n$}
		\RightLabel{\scriptsize \textsc{if-neq}}
		\UnaryInfC{$\langle \lstiC{IF}\ x \lstiC{=} n \ \{P\}, (\phi,c) \rangle \Downarrow (\phi,c)$}
		\DisplayProof
		\\ \\
		\bottomAlignProof
		\AxiomC{$\Lbag\, \omega(x), [P_0, \ldots, P_n], \omega \,\Rbag\ \Downarrow \omega'$}
		\RightLabel{\scriptsize \textsc{for}}
		\UnaryInfC{$\langle \lstiC{FOR}\ x\ \{P_0\} \ldots \{P_n\}, \omega \rangle \Downarrow \omega'$}
		\DisplayProof
		\qquad
		\bottomAlignProof
		\AxiomC{$\Lbag \sreverse(\omega(x)), [P_0, \ldots, P_n], \omega \,\Rbag\ \Downarrow \omega'$}
		\RightLabel{\scriptsize \textsc{rof}}
		\UnaryInfC{$\langle \lstiC{ROF}\ x\ \{P_0\} \ldots \{P_n\}, \omega \rangle \Downarrow \omega'$}
		\DisplayProof
        \\ \\
		\bottomAlignProof
		\AxiomC{$\langle T, \omega \rangle \Downarrow \omega'$}
		\RightLabel{\scriptsize \textsc{n}}
		\UnaryInfC{$\langle \lstiC{NORMAL}\ N\ \{T\}, \omega \rangle \Downarrow \omega'$}
		\DisplayProof
		\qquad\qquad
        \bottomAlignProof
		\AxiomC{$\langle \lstiC{SKIP}, \omega \rangle \Downarrow \omega$}
		\RightLabel{\scriptsize \textsc{base}}
		\UnaryInfC{$\Lbag\, [\,], [P_0,\ldots,P_n], \omega \,\Rbag\ \Downarrow \omega$}
		\DisplayProof
        \\ \\
        \bottomAlignProof
		\AxiomC{$\langle P_{\operatorname{min}(i,n)}, \omega \rangle \Downarrow \omega''$}
		\AxiomC{$\Lbag\, t, [P_0,\ldots,P_n], \omega'' \,\Rbag\ \Downarrow \omega'$}
		\RightLabel{\scriptsize \textsc{step}}
		\BinaryInfC{$\Lbag\, i \scons t, [P_0,\ldots,P_n], \omega \,\Rbag\ \Downarrow \omega'$}
		\DisplayProof
    \end{tabular}
	\caption{Big-step semantics of \ForNoS.}
	\label{fig:semantic}
\end{figure}

\paragraph{Operational Semantics of \ForNo.}
Fig.~\ref{fig:semantic} introduces the deductive rules of a big-step operational semantics. The rules derive judgments $\langle L;\omega\rangle\Downarrow\omega'$, and $\Lbag s, L;\omega\Rbag\Downarrow\omega'$, where $L$ is a list of Row Terms\footnote{For example, $L$ can contain $\lstiC{FOR x \{SKIP\}}\in\mathcal{T}$ which is not in \ForNo because iterations can occur only in the body of \lstiC{NORMAL}.}, $s$ is a \emph{Stack} and $\omega, \omega'$ are \emph{States}.
\par
The set $\Stack$ of Stacks contains the empty stack $[ \, ]$, and every $n \scons t$ with $t\in\Stack$ and $n \in \mathbb{N}$, where $n$ is the \emph{top} of the stack. The length $|s|$ of $s\in\Stack$ counts the elements in $s$. The functions $\shead, \stail, \sempty$, $\sreverse$ are the expected ones.
\par
The set $\Omega$ of all States contains pairs $(\phi, c)$ where $\phi$ is a \emph{store} in the set $\Phi$ of \emph{Stores}, and a \emph{counter} $c \in \mathbb{N}$.
\par
On the one hand, every store maps registers to values in $\mathcal{S}$. For example, $\phi[\lstiC{x} \to 0 \scons 1 \scons [\,] \, ]$ is a store. It extends the store $\phi$ by binding the register $\lstiC{x}$ to the stack $0\scons1\scons [\,]$ whose top is $0$.
By $\varnothing$ we denote the store such that $\forall x, \varnothing(x) = [\,]$. For example, $\varnothing[\lstiC{x}\to 1::[\,]\,]$  evaluates $\lstiC{x}$ to $1::[\,]$ and every other register to $[\,]$.
On the other hand, the error counter ensures that regardless of how  $\spush_n$ and $\spop_n$, which push/pop the value $n$ on a stack, are interwoven
the resulting state can be undone, guaranteeing  a \emph{reversible} and \emph{total} operational semantics.
The counter $c$ is crucial because, in a state $(\phi, c)$, the value of $c$ is a new dimension which separates states, otherwise ``overlapping''\footnote{A full proof by means of the proof assistant \textsf{Coq} that $\spush_n$, and $\spop_n$ are each other inverses, that is $\spop_n(\spush_n(s,c)) = (s,c) = \spush_n(\spop_n(s,c))$, for every $n \in \mathbb{N}$ and $(s, c) \in \Stack \times \mathbb{N}$ is in \url{https://github.com/MatteoPalazzo/FORNO}.}.
For example, for any $\phi$ we have:
\begin{align}
\label{align:pop on two elements}
\langle\lstiC{POP[0]x},(\phi[\lstiC{x} \to 0 \scons 1 \scons [\,]\,],0)\rangle &\Downarrow (\phi[\lstiC{x} \to 1 \scons [\,]\,],0)
\\
\label{align:pop on one element}
\langle\lstiC{POP[0]x},(\phi[\lstiC{x} \to 1 \scons [\,]\,],0)\rangle &\Downarrow (\phi[\lstiC{x} \to 1 \scons [\,] \,],1) 
\enspace .
\end{align}
\noindent
In~\eqref{align:pop on two elements} the interpretation of \lstiC{POP[0]x} by $\spop_{0}$ successfully pops $0$ from the stack \lstiC{x}.  
In~\eqref{align:pop on one element} the interpretation of \lstiC{POP[0]x} by $\spop_{0}$ cannot pop $0$ from the stack \lstiC{x} because its top is not $0$. Without the counter, both 
\eqref{align:pop on two elements} and~\eqref{align:pop on one element} would generate an identical final store, starting from two different stores, breaking reversibility.
We classify every state $(\phi,0)\in \Omega$ as \emph{sound}. 
For every $\omega = (\phi, c) \in \Omega$, we write $\omega(x)$ meaning $\phi(x)$.
\par
Some final comments on Fig.~\ref{fig:semantic} are worth doing to completely describe the operational semantics. 
The rules \textsc{if-eq}, and \textsc{if-neq} say whether the body should be executed or not, depending on the value on top of the stack held by $x$.
The rule \textsc{for} unfolds the iteration it represents. It appeals to \textsc{base}, and \textsc{step} which stepwise chose the term in $[P_0,\ldots,P_n]$ to interpret until the stack $\omega(x)$, or $\sreverse(\omega(x))$, empties.
Let $i$ be the current stack head, \textsc{step} interprets $P_i$ if $i \leq n$, and $P_n$ whenever $i > n$. So, $P_n$ is a sort of ``default choice''.
The unfolding lasts exactly $|x|$ steps, implementing an iteration \emph{on notation}. The rule \textsc{rof} is \textsc{for} after reversing $\omega(x)$.
We remark reversing $\omega(x)$ is local to \textsc{rof}.
The rule \textsc{n} delimits a scope. 

\paragraph{\ForNo is reversible.} The map $-(\cdot): \mathcal{T} \to \mathcal{T}$ in Fig.~\ref{fig:negation of terms}, inductively defined on the structure of its argument $T$, yields  
$-T$, which we call \emph{inverse of $T$}. The reason is that we can prove that composing $T$ with $-T$, or vice versa, is interpreted as \lstiC{SKIP}, that is the identity, by the operational semantics in Fig.~\ref{fig:semantic}. The key step of the proof is to show that $\langle T\lstiC{;} -T, (\phi,c) \rangle \Downarrow (\phi',c')$, with $c = c'$ and $\phi (x) = \phi'(x)$, for every $x, T$.

\begin{figure}
$
\begin{array}{c}
-(\lstiC{SKIP}) = \lstiC{SKIP}
\qquad\qquad\qquad \qquad 
-(Q_0 \lstiC{;} Q_1) = -(Q_1) \lstiC{;} -(Q_0)
\\[2pt]
-(\lstiC{PUSH[}n{]} \ x) = \lstiC{POP[}n{]} \ x 
\qquad 
\quad \quad 
-(\lstiC{POP[}n{]} \ x) = \lstiC{PUSH[}n{]} \ x 
\\[2pt]
-(\lstiC{IF} \ x\lstiC{=}n \ \lstiC{\{}Q\lstiC{\}}) 
=  \lstiC{IF} \ x\lstiC{=}n \ \lstiC{\{}-(Q)\lstiC{\}} 
\quad\quad 
-(\lstiC{NORMAL} \ V \lstiC{\{}Q\lstiC{\}}) 
= \lstiC{NORMAL} \ V \lstiC{\{}-(Q)\lstiC{\}} 
\\[2pt]
-(\lstiC{FOR} \ x \ \lstiC{\{}Q_0\lstiC{\}} \ldots {\{}Q_n\lstiC{\}}) 
=   \lstiC{ROF} \ x \ \lstiC{\{}-(Q_0)\lstiC{\}} \ldots {\{}-(Q_n)\lstiC{\}} 
\\[2pt]
-(\lstiC{ROF} \ x \ \lstiC{\{}Q_0\lstiC{\}} \ldots {\{}Q_n\lstiC{\}}) 
=   \lstiC{FOR} \ x \ \lstiC{\{}-(Q_0)\lstiC{\}} \ldots {\{}-(Q_n)\lstiC{\}}
\end{array}
$
\caption{The map $-(\cdot): \mathcal{T} \to \mathcal{T}$ to produce the inverse $-T$ of $T$.}
\label{fig:negation of terms}
\end{figure}


\section{\ForNo is \FPTIME-complete}
\label{section:ForNo can compute all PTIME functions}
``\ForNo is \FPTIME-complete'' means that \ForNo computes \emph{all} the \FPTIME functions. To prove it we define a \emph{compiler} $\llbracket \cdot \rrbracket$ mapping every polynomial time Turing machine $M$ to a term in \ForNo that simulates $M$.
We define $\llbracket \cdot \rrbracket$ by modifying an analogous compiler from \cite{Kristiansen2004OnTC} to conform to the syntactic constraints of \ForNo. 

\paragraph{Preliminaries.} We will refer to Turing machines (\TM) as a tuple composed by:
(i) a set $Q$ of states, with $q_0$ as initial and $q_\textsc{halt}$ as halting states;
(ii) an input/output alphabet $\Sigma$, s.t. $\blank \notin \Sigma$;
(iii) a semi-infinite tape with tape alphabet $\Gamma \supset \Sigma$ s.t. $\blank \in \Gamma$.
(iv) a transition function 
$\delta: (Q \setminus \{q_\textsc{halt}\} \times \Gamma \to Q \times \Gamma \times \{L,R\})$. 
A \emph{configuration} $c$ of a \TM is a triple $(u,q,a\cdot v)$ where $q$ is the current state, $a$ is the read character, $u$ and $v$ are two strings representing the left and the right-hand parts of the tape, respectively. 
More precisely, $v$ captures only the meaningful portion to the right of the head, excluding any occurrences of \blank$\xspace$ at the tape's tail. Consequently, $v$ never terminates with a \blank.
For every configuration $c$ and $c'$ we say `$c$ \emph{yields} $c'$', and write $c \rightsquigarrow c'$, to denote a single-step transition from $c$ to $c'$, while $c \yield{n} c'$, with $n \geq 1$, stands for an $n$-steps transition.
\par
A \TM $M$ is a \emph{Polynomial Time} \TM (\PTIME\ \TM) if and only if there exists a polynomial $p$ that characterizes $M$, that is, such that $M$ enters $q_\textsc{halt}$ after at most $p(|w|)$ steps, for \emph{every} finite input $w\in \Sigma^*$. We write $M_p$ to denote a \PTIME\ \TM with a characterizing polynomial $p$. Without loss of generality, we assume that $p(x) = ax^b$ for some $a,b \in \mathbb{Z}^+$\footnote{$\mathbb{Z}^+$ is the set of positive natural numbers.}. A total function $f: \Sigma^* \to \Sigma^*$ is in \FPTIME 
if and only there exists a \PTIME\ \TM $M_f$ such that: (i) $M$ enters $q_\textsc{halt}$ with \emph{only} $f(w)$ on its tape, whenever $M$ starts from $q_0$ with $w$ on its tape; (ii) the head of $M$ is over the leftmost cell of the tape, when in state $q_0$ or $q_\textsc{halt}$.

\begin{definition}
\label{def:forno_computable_function}
For any alphabet $\Sigma$, we call any bijection $\llangle \cdot \rrangle: \Sigma \to \{0,\ldots,|\Sigma|-1\}$ an \emph{encoding} function. We overload the notation by saying that an encoding $\llangle x \rrangle$ of a string $x \in \Sigma^*$ is a stack  defined as $\llangle \epsilon \rrangle = [\,]$ and $\llangle a \cdot x \rrangle = \llangle a \rrangle \scons \llangle x \rrangle$\footnote{By $a \cdot x$ we mean the string obtained by composing the character $a$ with $x$}.

Let $f: \Sigma^* \to \Sigma^*$ be a total function, $T$ be a term in \ForNo, both $\operatorname{in}$ and $\operatorname{out}$ be two (not necessarily distinct) registers of $T$, and $\llangle \cdot \rrangle$ be an encoding function. We say `\,$T$ \emph{computes} $f$' if and only if, 
$\phi$ exist s.t.
$\langle T, (\varnothing[\operatorname{in} \to \llangle x \rrangle],0)\rangle \Downarrow (\phi,0)$ with $\phi(\operatorname{out}) = \llangle f(x) \rrangle$. Moreover, we say `\,$T$ \emph{computes} $f$ \emph{with zero-garbage}' if and only if $\phi(y) = [\,]$, for every $y \neq \operatorname{out}$.
\end{definition}

\begin{wrapfigure}{l}{0.45\textwidth}
    \vspace{-20pt}
    \hspace{5pt}
  \begin{minipage}{0.4\textwidth}
\begin{lstlisting}[style=C, caption={Structure of $\llbracket M_p \rrbracket$.}, label=lst:tm_forno_compiler, mathescape = true, numbers = left]
POLYNOMIAL$_p$; $\label{line:polynomial}$
PUSH[$\llangle q_0 \rrangle$] q;
NORMAL p { $\label{line:mainStart}$
  FOR p {SIMULATE}}; $\label{line:mainEnd}$
REMOVE-BLANKS  $\label{line:sanitize}$\end{lstlisting}
  \end{minipage}
  \vspace{-20pt}
\end{wrapfigure}
\paragraph{The compiler $\llbracket \cdot \rrbracket$.} Let $M_p$ be a \PTIME\ \TM. Listing \ref{lst:tm_forno_compiler} illustrates the structure of $\llbracket M_p \rrbracket$
where \lstiC{POLYNOMIAL$_p$}, \lstiC{SIMULATE} and \lstiC{REMOVE-BLANKS} are \emph{macros} to be replaced by their definition.

We require the encoding $\llangle \cdot \rrangle$ to satisfy some constraints.
Let $\Gamma = \{a_0,\ldots, a_{n-1}\}$ and $\Sigma = \{b_0,\ldots,b_{m-1}\}$, with $n>m$, the alphabets of $M_p$, then (i) $\llangle a_i \rrangle = i$ for each $i \in \{0, \ldots, n-1\}$; (ii) $\llangle a \rrangle > \llangle b \rrangle$, for every $a \in \Gamma \setminus \Sigma$ and $b \in \Sigma$. 

A configuration $c = (u,q,v)$ of $M_p$ is represented in $\llbracket M_p \rrbracket$ by the registers $\lstiC{lft}$, $\lstiC{q}$ and $\lstiC{rgt}$, each standing for the the corresponding element of $c$. Formally, $\sigma$ \emph{simulates} $c$, written $\sigma \approx c$, if: (i) $\sigma$ is sound; (ii) $\sigma(\lstiC{lft}) = \llangle u \rrangle$; (iii) $head(\sigma(\lstiC{q})) = \llangle q \rrangle$ and $\sigma(\lstiC{rgt}) = \llangle v \rrangle +\!\!+ \; bs$ where $bs$ is a possibly empty stack of $\llangle \blank \rrangle$ and $+\!+$ is concatenation. If $bs = [\,]$, then we say `$\sigma$ \emph{cleanly simulates} $c$' and write $\sigma \cong c$. 

The simulation of $M_p$ on a given input $w$ by $\llbracket M_p \rrbracket$ assumes that $\llbracket M_p \rrbracket$ begins in the state $\sigma = (\varnothing[\lstiC{rgt} \to \llangle w \rrangle],0)$ and operates through three phases.

Firstly (see Line \ref{line:polynomial} of Listing \ref{lst:tm_forno_compiler}), $\llbracket M_p \rrbracket$ stores inside \lstiC{p} a stack whose length is the upper-bound $p(|w|) = c|w|^n$, for some $c$ and $n$, on the number of steps that $M_p$ does.
The term $\lstiC{POW}^n$ increments the length of \lstiC{p} by $|w|^n$ and is defined as $
\lstiC{NORMAL} \ \lstiC{rgt} \ \lstiC{\{}\underbrace{\lstiC{FOR} \ \lstiC{rgt} \ \lstiC{\{FOR} \ \lstiC{rgt} \ \lstiC{\{} \ \ldots \ \lstiC{FOR} \ \lstiC{rgt} \lstiC{\{}}_{\text{$n$ times}}\lstiC{PUSH[1]} \ \lstiC{p} \lstiC{\}} \ldots \lstiC{\}\}\}\}}$.
Therefore, $\lstiC{POLYNOMIAL}_p = \underbrace{\lstiC{POW}^n\lComma\ldots\lComma\lstiC{POW}^n}_{\text{$c$ times}}$ increments the length of $\lstiC{p}$ by $c|w|^n$. 

\begin{tabular}{l l}
& \\
    \begin{minipage}{.45\textwidth}
        \begin{lstlisting}[style=C, caption={\lstiC{TRANSITION}$_L(q', a')$.}, label=lst:transitionL, mathescape = true, numbers = left]
PUSH[$q'$] q; $\label{lstline:setStateL}$   
TO(rgt,g); $\label{lstline:set0AL}$
PUSH[$a'$] rgt; $\label{lstline:setA1AL}$
EMPTY(lft); $\label{lstline:emptyL}$
IF empty = 1 {
 TO(lft,rgt)} $\label{lstline:toLftRgt}$
\end{lstlisting}  
    \end{minipage} &  
    \begin{minipage}{.47\textwidth}
        \begin{lstlisting}[style=C, caption={\lstiC{TRANSITION}$_R(q', a')$.}, label=lst:transitionR, mathescape = true, numbers= left]
PUSH[$q'$] q; $\label{lstline:setStateR}$
TO(rgt,g); $\label{lstline:set0AR}$
PUSH[$a'$] rgt; $\label{lstline:setA1AR}$
TO(rgt,lft); $\label{lstline:toRgtLft}$
EMPTY(rgt); $\label{lstline:emptyR}$
IF empty = 0 {PUSH[$\llangle \blank \rrangle$] rgt} $\label{lstline:pushBlank}$ 
\end{lstlisting}  
    \end{minipage}
\end{tabular}

\begin{lstlisting}[style=C, caption={\lstiC{SIMULATE}. In italic we specify how to build the term.}, label=lst:simulate, mathescape = true, numbers = left]
$\textit{for each } q \in Q \setminus \{q_{\textsc{HALT}}\}:$ IF q = $\llangle q \rrangle$ {PUSH[$\llangle q \rrangle$] qStart};
$\textit{for each } a \in \Gamma$: IF rgt = $\llangle a \rrangle$ {PUSH[$\llangle a \rrangle$] aStart};
$\textit{for each } q \in Q \setminus \{q_{\textsc{HALT}}\}:$
 IF qStart = $\llangle q \rrangle$ { 
  $\textit{for each } a \in \Gamma$:
   IF aStart = $\llangle a \rrangle$ {
    $\textit{let } \delta(q,a) = (q',a',d)$ in: TRANSITION$_d$($\llangle q' \rrangle$, $\llangle a' \rrangle$)}};
\end{lstlisting}

Secondly (see Lines \ref{line:mainStart} and \ref{line:mainEnd} of Listing \ref{lst:tm_forno_compiler}), $\llbracket M_p \rrbracket$ iterates \lstiC{SIMULATE} in Listing \ref{lst:simulate} as many times as \lstiC{p}, that is $p(|w|)$ times. The purpose of \lstiC{SIMULATE} is to simulate a step of $M_p$.  By iterating $p(|w|)$ times, we simulate all the steps of $M_p$.
If the currently simulated configuration is not an halting one, \lstiC{SIMULATE} selects and simulate the appropriate transition through $\lstiC{TRANSITION}_d$. Such terms, defined in Listings \ref{lst:transitionL} and \ref{lst:transitionR}, simulate a transition by first changing the current state, then by changing the scanned character and finally by moving the simulated head to the left or right.   Due to space restrictions, we omit the full definition of \lstiC{EMPTY} and \lstiC{TO}. One checks if a stack is empty (setting the top of \lstiC{empty} to $0$, otherwise to $1$), the other moves the top from a stack to another one, respectively. Finally, \lstiC{g} and \lstiC{g1} are used to store the garbage of the computation.

\begin{theorem}\label{theorem:simulate-base}
Let $M$ be a $\PTIME \ \TM$. For every $\sigma$ and $c$ s.t. $\sigma \approx c$:
\begin{enumerate}
    \item if $c$ is an halting configuration, then $\langle \lstiC{SIMULATE}, \sigma \rangle \!\! \Downarrow \!\! \tau$, for some $\tau \approx c$. Otherwise, if it exists $c'$ s.t. $c \rightsquigarrow c'$, then $\langle \lstiC{SIMULATE}, \sigma \rangle \Downarrow \tau$, for some $\tau \approx c'$.
    \item If $c \yield{p} c_{\textsc{halt}}$ for an halting configuration $c_{\textsc{halt}}$, then for any $b\in\mathcal{S}$ s.t. $|b| \geq p$, we have $\Lbag b, [\lstiC{SIMULATE}], \sigma \Rbag \Downarrow \tau$, for some $\tau \approx c_{\textsc{halt}}$.
\end{enumerate}
\end{theorem}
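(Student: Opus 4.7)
The plan is to establish part 1 by direct inspection of the big-step semantics applied to the macros in Listings~\ref{lst:transitionL}, \ref{lst:transitionR}, and~\ref{lst:simulate}, and then to derive part 2 by induction on the halting time $p$, using part 1 together with an auxiliary stability lemma for halting configurations.

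For part 1, I would split on whether $c = (u, q, a \cdot v)$ is halting. In the halting case $q = q_{\textsc{HALT}}$, so none of the branches of the first \lstiC{for each} loop in Listing~\ref{lst:simulate} fires (they range over $Q \setminus \{q_{\textsc{HALT}}\}$), leaving \lstiC{qStart} empty. The second loop may push $\llangle a \rrangle$ onto \lstiC{aStart}, but the outer \lstiC{IF qStart} block in the third loop never matches by \textsc{if-neq} since $\sempty(\lstiC{qStart})$ holds, so no \lstiC{TRANSITION} runs. Hence \lstiC{lft}, \lstiC{q}, \lstiC{rgt} are untouched and $\tau \approx c$ still holds (the definition of $\approx$ only constrains these three registers, so extra content in auxiliary registers is tolerated). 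In the non-halting case with $c \rightsquigarrow c'$ via $\delta(q,a) = (q',a',d)$, the first two loops push exactly $\llangle q \rrangle$ on \lstiC{qStart} and $\llangle a \rrangle$ on \lstiC{aStart}; since stack heads are unique, the nested IFs trigger exactly the branch associated with $(q,a)$, invoking $\lstiC{TRANSITION}_d(\llangle q' \rrangle, \llangle a' \rrangle)$.

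The real work is verifying that $\lstiC{TRANSITION}_L$ and $\lstiC{TRANSITION}_R$ transform a state simulating $c$ into one simulating the corresponding $c'$. For $\lstiC{TRANSITION}_R$, I would track the effect line by line: \lstiC{PUSH[}$q'$\lstiC{]} \lstiC{q} installs $\llangle q' \rrangle$ as the new head of \lstiC{q}; \lstiC{TO(rgt,g)} discards the old scanned $\llangle a \rrangle$ to garbage; \lstiC{PUSH[}$a'$\lstiC{]} \lstiC{rgt} followed by \lstiC{TO(rgt,lft)} transfers $\llangle a' \rrangle$ onto \lstiC{lft}, extending the left part of the tape by the newly written symbol; finally the \lstiC{EMPTY(rgt)}/\lstiC{IF empty = 0} guard pushes $\llangle \blank \rrangle$ onto \lstiC{rgt} exactly when $v$ was empty, restoring the invariant that $\shead(\lstiC{rgt})$ encodes the new scanned cell. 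The $\lstiC{TRANSITION}_L$ case is symmetric, with \lstiC{EMPTY(lft)} handling the boundary when the head is already leftmost. I expect this case analysis to be the main obstacle: the definition of $\approx$ allows trailing $\llangle \blank \rrangle$ on \lstiC{rgt} but forbids any leading blanks on \lstiC{lft}, so each tape-boundary scenario (rightmost non-blank, empty $v$, empty $u$) must be checked separately to confirm that $\tau \approx c'$ and, a fortiori, that $\tau$ remains sound (the error counter stays at $0$).

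Part 2 follows by induction on $p \geq 1$. For the inductive step, assume $c \rightsquigarrow c' \yield{k} c_{\textsc{HALT}}$ with $p = k+1$ and take $b = i \scons t$ with $|b| \geq p$, so $|t| \geq k$. The \textsc{step} rule decomposes $\Lbag b, [\lstiC{SIMULATE}], \sigma \Rbag$ into one execution $\langle \lstiC{SIMULATE}, \sigma \rangle \Downarrow \omega''$, which by part 1 gives $\omega'' \approx c'$, followed by $\Lbag t, [\lstiC{SIMULATE}], \omega'' \Rbag \Downarrow \tau$, to which the induction hypothesis applies, yielding $\tau \approx c_{\textsc{HALT}}$. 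The base case $p = 1$ uses part 1 once, then an auxiliary induction on $|t|$ using the halting clause of part 1 to show that further iterations preserve the invariant $\tau \approx c_{\textsc{HALT}}$; this same auxiliary lemma is what absorbs the slack between $|b|$ and $p$ in general.
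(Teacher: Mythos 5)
Your overall decomposition (line-by-line analysis of $\lstiC{SIMULATE}$ and $\lstiC{TRANSITION}_d$ for part~1, then induction on $p$ plus a stability lemma for halting configurations to absorb the slack $|b|-p$ in part~2) is the natural one, and your non-halting case is solid: there the first two loops of Listing~\ref{lst:simulate} push $\llangle q \rrangle$ and $\llangle a \rrangle$ freshly onto \lstiC{qStart} and \lstiC{aStart}, so whatever lies below their heads is harmless and exactly the branch for $(q,a)$ fires. The gap is in the halting case, on which your whole part~2 rests. You claim the first loop ``leaves \lstiC{qStart} empty'', but $\sigma \approx c$ constrains only \lstiC{lft}, \lstiC{q}, \lstiC{rgt} and the counter; it says nothing about \lstiC{qStart}. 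Worse, precisely in the situation where you invoke the halting clause --- the extra iterations after the simulated machine has halted --- \lstiC{qStart} is guaranteed to be non-empty: every non-halting execution of \lstiC{SIMULATE} pushes the current (non-halting) state onto \lstiC{qStart} and never removes it. So after the step that enters $q_{\textsc{halt}}$, $\shead$ of \lstiC{qStart} is the previous non-halting state $q_i$; the second loop then pushes the currently scanned $\llangle a \rrangle$ onto \lstiC{aStart}, the guards $\lstiC{IF qStart =} \llangle q_i \rrangle$ and $\lstiC{IF aStart =} \llangle a \rrangle$ in the third loop both fire, and $\lstiC{TRANSITION}_d$ for $\delta(q_i,a)$ executes, modifying \lstiC{q}, \lstiC{lft} and \lstiC{rgt}. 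The resulting state no longer simulates $c_{\textsc{halt}}$, so your stability invariant is not preserved and the induction in part~2 collapses; the same example also shows that part~1's halting clause cannot be proved for \emph{arbitrary} $\sigma \approx c$ by the emptiness argument.

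With Listing~\ref{lst:simulate} read literally (the first loop ranges over $Q \setminus \{q_{\textsc{halt}}\}$) this is not something you can patch locally: the stability property itself fails. You must either observe explicitly that the first loop has to record the halting state as well --- i.e.\ push $\llangle q_{\textsc{halt}} \rrangle$ onto \lstiC{qStart} when $\shead(\sigma(\lstiC{q})) = \llangle q_{\textsc{halt}} \rrangle$, so that the third loop always tests a freshly pushed head and no branch fires in the halting case --- or prove the theorem under a strengthened simulation relation that also constrains \lstiC{qStart}/\lstiC{aStart} and is actually preserved by \lstiC{SIMULATE}. Either way, the halting case should rest on the freshly pushed head of \lstiC{qStart}, exactly as your non-halting case does, not on its emptiness. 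A minor further point: $\approx$ requires $\sigma(\lstiC{lft}) = \llangle u \rrangle$ exactly (and $u$ may itself contain blanks), so ``forbids leading blanks on \lstiC{lft}'' is not quite the right formulation of the boundary conditions you need to check in $\lstiC{TRANSITION}_L$; what matters is that only \lstiC{rgt} may carry the trailing stack of $\llangle \blank \rrangle$.
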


\begin{lstlisting}[style=C, caption={The macro \lstiC{REMOVE-BLANK}.}, label=lst:sanitize, mathescape = true, numbers = left]
NORMAL rgt {ROF rgt {PUSH[$\llangle a_0 \rrangle$] g1} ... {PUSH[$\llangle a_{n-1} \rrangle$] g1}}; $\label{lstline:rgt_copy}$
NORMAL g1 {
  FOR g1 {POP[$\llangle a_0 \rrangle$] rgt} ... {POP[$\llangle a_{n-1} \rrangle$] rgt}; $\label{lstline:rgt_erase}$
  ROF g1 {PUSH[$\llangle b_0 \rrangle$] rgt} ... {PUSH[$\llangle b_{m-1} \rrangle$] rgt} {SKIP}} $\label{lstline:g1_copy}$
\end{lstlisting} 

Finally (see Line \ref{line:sanitize} of Listing \ref{lst:tm_forno_compiler}), $\llbracket M_p \rrbracket$ cleans the output eliminating the spurious occurrences of $\blank\xspace$ that are added to \lstiC{rgt} when scanning the empty segment of the tape (see Lines \ref{lstline:emptyR} and \ref{lstline:pushBlank} of Listing \ref{lst:transitionR}).
In fact, if $M_p$ halts in $c_{\textsc{halt}} = (\epsilon,q_{\textsc{halt}},v)$ for some $v$, then after the main iteration 
(see Lines \ref{line:mainStart} and \ref{line:mainEnd} of Listing \ref{lst:tm_forno_compiler}) 
$\llbracket M_p \rrbracket$ halts in a state $\tau \approx c_{\textsc{halt}}$ s.t. $\tau(\lstiC{rgt}) = \llangle v \rrangle +\!\!+ bs$, where $bs$ is a stack of \blank. \lstiC{REMOVE-BLANK} in Listing \ref{lst:sanitize} removes $bs$ from \lstiC{rgt} leading to a final state $\tau'$ s.t. $\tau'(\lstiC{rgt}) = \llangle v \rrangle$, hence $\tau' \cong c_{\textsc{halt}}$. The conclusion is:

\begin{theorem}\label{theorem:soundness-simulation}
Let $M$ a \PTIME \ \TM. For every input $w$ of $M$, let $c_w = (\epsilon, q_0, w)$ be the starting configuration of $M$ and $c_{\textsc{halt}}$ the configuration in which $M$ halts.
Then $\langle \llbracket M \rrbracket , (\varnothing[\lstiC{rgt} \to \llangle w \rrangle],0) \rangle \Downarrow \tau$ s.t. $\tau \cong c_{\textsc{halt}}$.
\end{theorem}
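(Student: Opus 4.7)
The plan is to trace the execution of $\llbracket M \rrbracket$ through the three phases marked by Lines~\ref{line:polynomial}, \ref{line:mainStart}--\ref{line:mainEnd} and \ref{line:sanitize} of Listing~\ref{lst:tm_forno_compiler}, composing a lemma for each. First, I would show by a straightforward induction on the definition of $\lstiC{POW}^n$ and on the $c$ copies glued by $\lComma$ that executing \lstiC{POLYNOMIAL}$_p$ from $(\varnothing[\lstiC{rgt}\to\llangle w \rrangle],0)$ yields a state whose store is unchanged except that $\lstiC{p}$ now holds a stack of length $p(|w|)=c|w|^n$; the counter stays at $0$ because every atomic term inside $\lstiC{POW}^n$ is a \lstiC{PUSH}. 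Appending $\lstiC{PUSH[}\llangle q_0\rrangle\lstiC{] q}$ produces a state $\sigma_1$ with $\sigma_1 \cong c_w$.

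Since $M$ is characterized by $p$, there is $k\leq p(|w|)$ with $c_w \yield{k} c_{\textsc{halt}}$. Unfolding the rule \textsc{n} and then \textsc{for} on the block $\lstiC{NORMAL p \{FOR p \{SIMULATE\}\}}$ reduces the iteration to a judgement of the form $\Lbag \sigma_1(\lstiC{p}), [\lstiC{SIMULATE}], \sigma_1 \Rbag$, whose driving stack has length $p(|w|)\geq k$. Part~(2) of Theorem~\ref{theorem:simulate-base} applied with $b=\sigma_1(\lstiC{p})$ then delivers a state $\tau \approx c_{\textsc{halt}}$; in particular $\tau(\lstiC{rgt}) = \llangle v \rrangle +\!\!+ bs$ for some (possibly empty) stack $bs$ of occurrences of $\llangle \blank \rrangle$. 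It remains to show that \lstiC{REMOVE-BLANK} of Listing~\ref{lst:sanitize} transforms $\tau$ into some $\tau'$ that preserves soundness and satisfies $\tau'(\lstiC{rgt}) = \llangle v \rrangle$, whence $\tau' \cong c_{\textsc{halt}}$.

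The last point I would establish following the three statements of the macro: (a) the outer \lstiC{ROF rgt} copies $\tau(\lstiC{rgt})$ into \lstiC{g1}, because the reversal inside \lstiC{ROF} together with the bottom-up scan is identity-on-content and the $i$-th body pushes exactly $\llangle a_i \rrangle=i$; (b) the \lstiC{FOR g1} empties \lstiC{rgt} via matched \lstiC{POP}s, keeping the counter at $0$; (c) the final \lstiC{ROF g1} reverses \lstiC{g1} and reconstructs \lstiC{rgt} by pushing only characters with encoding strictly less than $m=|\Sigma|$, dispatching the remainder to the default body \lstiC{SKIP}. I expect step~(c) to be the main obstacle: one must verify simultaneously that (i) the \emph{right} elements are skipped, which rests on the encoding constraint $\llangle a \rrangle > \llangle b \rrangle$ for every $a \in \Gamma \setminus \Sigma$ and $b \in \Sigma$ (so that $\llangle \blank \rrangle \geq m$ is routed to \lstiC{SKIP} via the default body mechanism in rule \textsc{step}); and (ii) the order of the resulting pushes reconstructs $\llangle v \rrangle$ rather than its reverse, which is precisely why the cleanup uses \lstiC{ROF} instead of \lstiC{FOR} at this stage. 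Composing the three phases closes the argument.
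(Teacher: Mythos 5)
Your proposal is correct and follows essentially the same route as the paper, which likewise composes the three phases of Listing~\ref{lst:tm_forno_compiler}: \lstiC{POLYNOMIAL}$_p$ (plus \lstiC{PUSH[}$\llangle q_0 \rrangle$\lstiC{] q}) producing a state that cleanly simulates $c_w$ with \lstiC{p} of length $p(|w|)$, Theorem~\ref{theorem:simulate-base}(2) handling the iterated \lstiC{SIMULATE}, and \lstiC{REMOVE-BLANK} stripping the trailing blank symbols to upgrade $\approx$ to $\cong$. Your analysis of the cleanup phase (order-preserving copy via \lstiC{ROF}, matched \lstiC{POP}s keeping the counter at $0$, and reconstruction relying on the encoding constraint $\llangle a \rrangle > \llangle b \rrangle$ together with the default-body dispatch) is in fact more detailed than the paper's own sketch and correctly identifies why \lstiC{ROF} rather than \lstiC{FOR} is used there.
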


\noindent By definition of \FPTIME, the direct consequence of Thm. \ref{theorem:soundness-simulation} is:

\begin{corollary}\label{theorem:completenessPTM short}
If a function $f$ is in \FPTIME, then $f$ is computable in \ForNo.
\end{corollary}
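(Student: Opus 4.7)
The plan is to unpack Theorem \ref{theorem:soundness-simulation} into the language of Definition \ref{def:forno_computable_function}. Starting from $f \in \FPTIME$, I would pick any characterizing \PTIME\ \TM $M_f$ witnessing $f \in \FPTIME$; by the paper's own definition of \FPTIME, $M_f$ starts from $(\epsilon, q_0, w)$ and halts in $c_{\textsc{halt}} = (\epsilon, q_{\textsc{halt}}, f(w))$, with the head over the leftmost cell. I would then fix an encoding $\llangle \cdot \rrangle$ on $\Gamma \supseteq \Sigma$ satisfying the constraints required by $\llbracket \cdot \rrbracket$: a bijection onto $\{0,\ldots,|\Gamma|-1\}$ assigning every $a \in \Gamma \setminus \Sigma$ a value strictly greater than every $b \in \Sigma$. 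Such an encoding always exists, obtained by enumerating $\Sigma$ first and then extending to $\Gamma \setminus \Sigma$.

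The candidate term is $T = \llbracket M_f \rrbracket$, with both $\operatorname{in}$ and $\operatorname{out}$ instantiated to the register $\lstiC{rgt}$. Applying Theorem \ref{theorem:soundness-simulation} to $M_f$ on input $w$ yields
\[
\langle \llbracket M_f \rrbracket, (\varnothing[\lstiC{rgt} \to \llangle w \rrangle], 0) \rangle \Downarrow \tau
\]
with $\tau \cong c_{\textsc{halt}}$. Unfolding $\cong$: the state $\tau$ is sound, hence $\tau = (\phi, 0)$ for some store $\phi$, and since the simulation is \emph{clean} (the trailing stack $bs$ of $\llangle \blank \rrangle$ is empty) we have $\phi(\lstiC{rgt}) = \llangle f(w) \rrangle +\!\!+ \; [\,] = \llangle f(w) \rrangle$. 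This matches exactly the conclusion required by Definition \ref{def:forno_computable_function}, witnessing that $\llbracket M_f \rrbracket$ computes $f$.

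The corollary therefore carries essentially no obstacle beyond Theorem \ref{theorem:soundness-simulation}: all the real work lives in Theorems \ref{theorem:simulate-base} and \ref{theorem:soundness-simulation}, and the corollary is a mere matching of conclusions against the interface specified by Definition \ref{def:forno_computable_function}. The two points that deserve explicit mention are that (i) the encoding can be chosen to satisfy the side conditions imposed by $\llbracket \cdot \rrbracket$, and (ii) \emph{clean} simulation, as opposed to mere simulation, is what guarantees the absence of trailing $\llangle \blank \rrangle$ in $\phi(\lstiC{rgt})$, which is precisely the role of Listing \ref{lst:sanitize}. Since the statement only asks for computability, not zero-garbage computability, the other registers (\lstiC{p}, \lstiC{lft}, \lstiC{q}, \lstiC{g}, \lstiC{g1}, \ldots) are allowed to remain non-empty in $\phi$, which removes what would otherwise be the only non-trivial burden.
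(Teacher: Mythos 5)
Your proposal is correct and follows exactly the paper's route: the paper states the corollary as a direct consequence of Theorem~\ref{theorem:soundness-simulation} together with the definition of \FPTIME, and your argument simply makes that unpacking explicit (choice of $M_f$ and of an admissible encoding, instantiating $\operatorname{in}=\operatorname{out}=\lstiC{rgt}$, and reading $\tau \cong c_{\textsc{halt}}$ against Definition~\ref{def:forno_computable_function}). The added care about the encoding's side conditions and about cleanliness versus mere simulation is consistent with, not divergent from, what the paper leaves implicit.
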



\section{\ForNo is \FPTIME-sound}
\label{section: ForNo is Correct w.r.t. FP}

``\ForNo is \FPTIME-sound'' means that all its computations terminate within a polynomial number of steps dependent on the initial configuration's dimension. Formally:
\begin{theorem}\label{theorem:forno_fptime_sound}
For every term $T$  in $\ForNo$, if $T$ computes  the function $f$, then $f$ is in \FPTIME.
\end{theorem}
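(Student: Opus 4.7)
The plan is to show that for every term $T$ in \ForNo, the number of applications of the big-step rules in Fig.~\ref{fig:semantic} needed to derive $\langle T, \omega \rangle \Downarrow \omega'$ is bounded by a polynomial in $|\omega|$ (the sum of the sizes of the stacks of $\omega$). Once this bound is established, a standard simulation of \ForNo states (finitely many stacks plus a natural-number counter) on a multi-tape Turing machine incurs only polynomial overhead per atomic step, so the total TM running time stays polynomial in the size of the input $\llangle x \rrangle$, and Def.~\ref{def:forno_computable_function} then delivers $f \in \FPTIME$.

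The core of the argument is a lemma for safe terms: for every safe term $S$ whose iterations are led by registers in a fixed list $N = \{y_1, \ldots, y_k\}$ that stay read-only during the evaluation of $S$, there exists a polynomial $P_S$ such that any derivation $\langle S, \omega \rangle \Downarrow \omega'$ uses at most $P_S(|\omega(y_1)|, \ldots, |\omega(y_k)|)$ rule applications. I would prove this by induction on $S$: atomic terms, sequences, and \lstiC{IF} are immediate. The interesting case is $S = \lstiC{FOR} \ x \ \{P_0\}\ldots\{P_n\}$ (and dually \lstiC{ROF}): by constraint (ii) of the \ForNo syntax, the leading register $x$ belongs to $N$, so its stack is invariant while the iteration unfolds; hence the rules \textsc{base}/\textsc{step} fire exactly $|\omega(x)| + 1$ times, each firing triggering one of the bodies $P_i$ whose cost is bounded by the inductive hypothesis. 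Because those bounds depend only on the same read-only register sizes, the total count is $|\omega(x)| \cdot \max_i P_{P_i}(|\omega(y_1)|, \ldots, |\omega(y_k)|) + O(|\omega(x)|)$, still a polynomial in $|\omega(y_1)|, \ldots, |\omega(y_k)|$.

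The theorem then follows by an outer induction on the top-level grammar $T$. Atomic terms and selections contribute only constant overhead. The sequential case uses the observation that a single atomic step can grow the state by at most one entry, so if a first term runs in $k$ steps its output state has size at most $|\omega| + k$; composing polynomials under this rule preserves polynomiality. The only case that invokes the lemma is $\lstiC{NORMAL} \ N \ \{S\}$: on entry, $|\omega(y_j)| \leq |\omega|$ for every $y_j \in N$, so the lemma yields a bound polynomial in $|\omega|$.

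\textbf{Main obstacle.} The delicate point is pinning down the right inductive invariant for safe terms: the bound must depend only on quantities that cannot grow during execution, namely the sizes of the normal registers. This is precisely what constraint (ii) of \ForNo buys us---were the leading register of a \lstiC{FOR} allowed to be written by the body, or were nested iterations allowed to be led by non-normal registers, the iteration count could itself grow across unfoldings, potentially compounding to exponential blow-up.
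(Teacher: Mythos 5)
Your proposal is correct, but it takes a genuinely different route from the paper. The paper does not carry out an induction at all: it observes that \ForNo is a \emph{stack language} in the sense of \cite{Kristiansen2004OnTC}, that the read-only constraints attached to \lstiC{NORMAL} force every term to be \emph{simple} in the sense of \cite{DBLP:journals/jcss/KasaiA80,Kristiansen2004OnTC}, and then imports the polynomial-time bound those papers prove for simple stack programs. You instead prove the bound from scratch: your key lemma --- a polynomial bound, in the lengths of the normal registers at block entry, on the number of applications of the rules of Fig.~\ref{fig:semantic} needed to evaluate a safe term --- exploits directly the fact that the registers in $N$ are read-only in the body of \lstiC{NORMAL} and are the only ones allowed to lead iterations, so every loop bound is frozen and per-iteration costs are uniform; the outer induction on the top-level grammar, together with the observation that each rule application grows the state by at most one entry, handles sequencing, and a routine Turing-machine simulation then yields \FPTIME. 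The paper's route buys brevity and generality (the cited results cover the weaker ``simple'' condition, of which \ForNo's read-only discipline is a special case); your route buys a self-contained argument tailored to the big-step semantics, which the paper only sketches. Two minor points to tidy up: the number of unfoldings of a \lstiC{FOR} is $|\omega(x)|$ already because the \textsc{for} rule snapshots $\omega(x)$ before unfolding, so read-only-ness is needed not for the iteration count of a single loop but to keep loop bounds and per-body costs invariant across nested and repeated iterations; and for the final simulation you should note that every stack entry is a constant occurring in $T$ or in the input encoding $\llangle \cdot \rrangle$, hence of bounded size, so each big step indeed costs only polynomial time on the machine. Both are easy to patch and do not affect the soundness of your argument.
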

\noindent
The theorem holds because \ForNo is designed to meet two criteria identified in \cite{DBLP:journals/jcss/KasaiA80,Kristiansen2004OnTC} as sufficient for a language to be in \FPTIME. We summarize them.
\par
The first one is that, according to \cite{Kristiansen2004OnTC}, \ForNo is a \emph{stack language} because it is an imperative computational model such that:
(i) variables store stacks over some alphabet;
(ii) programs are obtained as the least set closed under sequential composition, conditional and iteration, starting from base commands like $\spop$ and $\spush$, for example, which operate on variables;
(iii) conditional selection depends on the top element of its stack argument;
(iv) iterations proceed according to stack-based notation, lasting exactly as many steps as the length of the stack in the variable driving the iteration unfolding.
\par
The second one is that, the syntactical constraints on every instance of $\lstiC{NORMAL}$ $N \ \{S\} $ implies that \ForNo contains \emph{simple terms} only, where, for a term to be simple means that it does not include iterations whose body $T$ contains a variable $x$ whose value length depends on itself in $T$, which would mean that the length of $x$ can grow exponentially.
\par
Therefore, from \cite{DBLP:journals/jcss/KasaiA80,Kristiansen2004OnTC} we can conclude that the execution of a simple term of a stack language always concludes within a polynomial number of steps depending on the input length.

For example, \ForNo rules out terms like:
\begin{align}
\label{align:term rules out}
&\lstiC{NORMAL x,y,z \{FOR z \{FOR x \{PUSH[1] y\};FOR y \{PUSH[1] x\}\}\}}
\enspace .
\end{align}
\noindent
In~\eqref{align:term rules out}, the term \lstiC{\{FOR x \{PUSH[1] y\};FOR y \{PUSH[1] x\}} implies that the value of \lstiC{x} controls the value of \lstiC{y}, and vice versa.
Consequently, \lstiC{x} controls the growth of its length.
Nesting the self-control of the growth of \lstiC{x} inside the iteration led by \lstiC{z} would imply an exponential blow-up of the dimension of the stack in \lstiC{x}: at each iteration the length of \lstiC{x} grows, which causes the length of \lstiC{x} to grow even more at the next iteration. This would require an exponential time, just to write the stack. But, we exclude this term because \lstiC{x} and \lstiC{y} must be read-only.

%


\begin{figure}
\centering
\begin{tikzpicture}[scale=1]
\node[draw = none](h1) at (0,0) {$f$ is a \twowayb};

\node[draw = none](h21) at (-2.5,-0.75) {$f \in \FPTIME$};

\node[draw = none](h22) at (2.5,-0.75) {$f^{-1} \in \FPTIME$};

\draw[->] (h1) -- (h21);
\draw[->] (h1) -- (h22);

\node[draw = none] at (6.5,-0.75) {(by Definition)};

\node[draw, rectangle, minimum width=1cm, minimum height=1cm] (tf) at (-2.5,-2.25) {$T_f$};
\node at ([xshift=-1cm]tf) {$\llangle x \rrangle$};
\node at ([xshift=1.25cm, yshift = 0.35cm]tf) {$\llangle f(x) \rrangle$};
\node at ([xshift=1.25cm, yshift = -0.35cm]tf) {$g(x)$};

\node[draw, rectangle, minimum width=1cm, minimum height=1cm] (tf-inv) at (2.5,-2.25) {$T_{f^{-1}}$};
\node at ([xshift=-1.25cm]tf-inv) {$\llangle f(x) \rrangle$};
\node at ([xshift=1.25cm, yshift = 0.35cm]tf-inv) {$\llangle x \rrangle $};
\node at ([xshift=1.25cm, yshift = -0.35cm]tf-inv) {$ g'(x)$};

\draw[->] (h21) -- (tf);
\draw[->] (h22) -- (tf-inv);

\node[draw, rectangle, minimum width=1cm, minimum height=1cm] (btf) at (-2.5,-4.25) {$B[T_f]$};
\node at ([xshift=-1.35cm]btf) {$\llangle x \rrangle$};
\node at ([xshift=1.25cm, yshift = 0.35cm]btf) {$\llangle f(x) \rrangle$};
\node at ([xshift=1.25cm, yshift = -0.35cm]btf) {$\llangle x \rrangle$};
\node (fxbtf) at ([xshift=1.70cm, yshift = 0.35cm]btf){};
\node (xbtf) at ([xshift=1.70cm, yshift = -0.35cm]btf){};

\node[draw, rectangle, minimum width=1cm, minimum height=1cm] (btf-inv) at (2.5,-4.25) {$-B[T_{f^{-1}}]$};
\node at ([xshift=1.70cm]btf-inv) {$\llangle f(x) \rrangle$};
\node  at ([xshift=-1.5cm, yshift = 0.35cm]btf-inv) {$\llangle x \rrangle$};
\node at ([xshift=-1.5cm, yshift = -0.35cm]btf-inv) {$\llangle f(x) \rrangle$};
\node (xbtf-inv) at ([xshift=-1.9cm, yshift = 0.35cm]btf-inv) {};
\node (fxbtf-inv) at ([xshift=-1.9cm, yshift = -0.35cm]btf-inv) {};

\node[draw = none] at (6.5,-2.25) {(by Corollary \ref{theorem:completenessPTM short})};

\node[draw, dashed, rectangle, fit=(btf)(btf-inv), inner sep=0.25cm, minimum width=6.75cm] (group) {};
\node at ([yshift=1cm]group) {$R_f$};

\draw[->] (tf) -- (btf);
\draw[->] (tf-inv) -- (btf-inv);

\draw[->] (xbtf) -- (xbtf-inv);
\draw[->] (fxbtf) -- (fxbtf-inv);

\node[draw = none, align=left] at (6.5,-4.25) {(by ``Bennett trick'' \\ and inversion)};

\end{tikzpicture}
\caption{Proof scheme of the ``Only if direction'' of Thm.~\ref{theorem:forno_characterize_two-way}. $T_f$ computes $f(x)$, producing $g(x)$ as garbage. $B[T_f]$ results from applying the ``Bennett Trick'' to $T_f$. Analogously for $T_{f^{-1}}$. We remark a strong analogy with \cite[Fig. 1]{Axelsen2011WhatDR}.}
\label{fig:two-way-proof-scheme}
\end{figure}
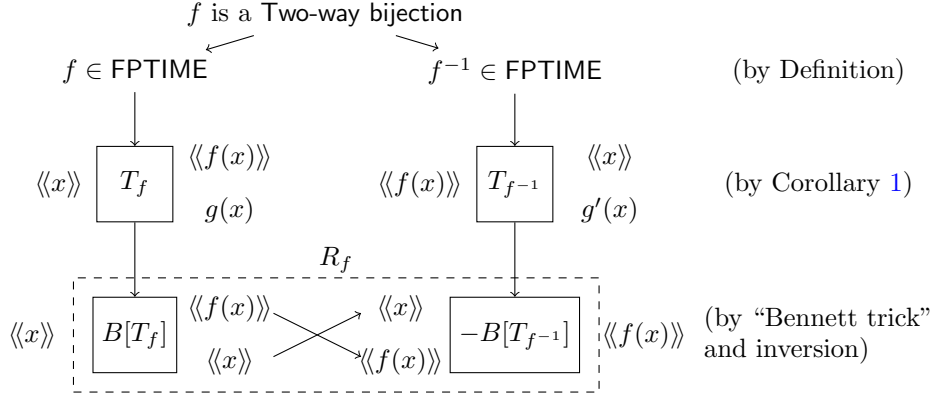

\section{\ForNo Characterizes {\normalfont\textsf{Two-way Bijections}}\xspace}
\label{section:ForNo characterizes twoways bijections}

From \cite{TheComplexityTheoryCompanion:02} we recall that a bijection $f$ is: (i)
\honest if $\forall x, |x| \leq q(|f(x)|)$ for some polynomial $q$, that is $|x|$ is at most polynomially longer than $|f(x)|$;
(ii) a \twowayb if both $f$ and $f^{-1}$ are \honest and in \FPTIME.

\begin{theorem}\label{theorem:forno_characterize_two-way}
A bijection $f: \Sigma^* \to \Sigma^*$ is a \twowayb  if and only if \ForNo computes $f$ with zero-garbage.
\end{theorem}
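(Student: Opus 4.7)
I would prove the two directions separately.

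\emph{``If'' direction.} Suppose that $T$ is a term in \ForNo that computes $f$ with zero-garbage. Theorem~\ref{theorem:forno_fptime_sound} immediately gives that $f$ is in \FPTIME. The zero-garbage assumption means that $T$ takes the sound state $\omega_x = (\varnothing[\operatorname{in}\to\llangle x\rrangle],0)$ to the sound state $\omega_{f(x)} = (\varnothing[\operatorname{out}\to\llangle f(x)\rrangle],0)$. By the reversibility of \ForNo (recalled at the end of Section~\ref{section:Normal iteration on notation}), $-T$ takes $\omega_{f(x)}$ back to $\omega_x$, and therefore computes $f^{-1}$ with zero-garbage once the roles of $\operatorname{in}$ and $\operatorname{out}$ are swapped. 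A second application of Theorem~\ref{theorem:forno_fptime_sound} yields that $f^{-1}$ is in \FPTIME. Both honesty conditions are automatic, since an \FPTIME function has output length polynomially bounded in its input length: $|x| = |f^{-1}(f(x))|$ is polynomially bounded in $|f(x)|$, and symmetrically. Hence $f$ is a \twowayb.

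\emph{``Only if'' direction.} I would mimic inside \ForNo the scheme of Fig.~\ref{fig:two-way-proof-scheme}, i.e.\ Bennett's trick applied to both $f$ and $f^{-1}$. Since $f$ and $f^{-1}$ are in \FPTIME, Corollary~\ref{theorem:completenessPTM short} supplies terms $T_f, T_{f^{-1}}$ in \ForNo computing them, possibly with garbage in auxiliary registers. The key gadget is a zero-garbage reversible $\mathsf{COPY}$ from a register $r$ to a fresh register $r'$ (empty on entry): it is definable in \ForNo as a \lstiC{NORMAL} block scoping $r$ around a \lstiC{ROF} iteration on $r$ whose $k$-th body is $\lstiC{PUSH[}k\lstiC{]}\ r'$, one body per letter $k\in\{0,\ldots,|\Sigma|-1\}$; on exit, $r' = r$ and no other register is touched. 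With this, set $B[T_f] = T_f \lstiC{;} \mathsf{COPY}(\operatorname{out}_f,\mathrm{aux}) \lstiC{;} -T_f$ for a fresh register $\mathrm{aux}$: on $\omega_x$, $T_f$ writes $\llangle f(x)\rrangle$ into $\operatorname{out}_f$ together with some garbage; $\mathsf{COPY}$ duplicates it into $\mathrm{aux}$; then $-T_f$, which does not mention $\mathrm{aux}$, inverts $T_f$ on all other registers. The net effect is $\omega_x \mapsto (\operatorname{in}\to\llangle x\rrangle, \mathrm{aux}\to\llangle f(x)\rrangle, \text{rest empty})$. Apply the same construction to a copy $T_{f^{-1}}'$ of $T_{f^{-1}}$ whose input/output register has been renamed to $\mathrm{aux}$ and whose auxiliary registers have been renamed disjointly from those of $T_f$; choose $\operatorname{in}$ as the COPY target. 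The resulting $B[T_{f^{-1}}']$ sends $(\mathrm{aux}\to\llangle y\rrangle, \text{rest empty})$ to $(\mathrm{aux}\to\llangle y\rrangle, \operatorname{in}\to\llangle f^{-1}(y)\rrangle, \text{rest empty})$. Finally, $R_f := B[T_f] \lstiC{;} -B[T_{f^{-1}}']$: applying $-B[T_{f^{-1}}']$ to the state reached after $B[T_f]$ --- with $y:=f(x)$ and $f^{-1}(y)=x$ --- erases exactly the preserved $\llangle x\rrangle$ on $\operatorname{in}$, yielding $(\mathrm{aux}\to\llangle f(x)\rrangle, \text{rest empty})$, i.e.\ a zero-garbage computation of $f$.

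The main obstacle is to verify that each step of the construction remains inside \ForNo, i.e.\ respects the read-only and \lstiC{NORMAL}-scoping restrictions. Disjoint renaming resolves any potential clash between $B[T_f]$ and $B[T_{f^{-1}}']$, and a straightforward induction on Fig.~\ref{fig:negation of terms} shows that $-(\cdot)$ preserves membership in \ForNo. The semantic calculation then reduces to repeated applications of the ``$T\lstiC{;}-T$ is the identity'' property recalled at the end of Section~\ref{section:Normal iteration on notation}, once one has checked that $\mathsf{COPY}$ indeed leaves $r$ unchanged and sets $r' = r$ on an initially empty $r'$.
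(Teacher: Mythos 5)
Your proposal is correct and follows essentially the same route as the paper's proof: the ``if'' direction combines Theorem~\ref{theorem:forno_fptime_sound} with reversibility of $-T$ and derives honesty from the polynomial output bound of $f^{-1}\in\FPTIME$, while the ``only if'' direction uses Corollary~\ref{theorem:completenessPTM short}, the Bennett trick $B[T_f]=T_f\lstiC{;}\mathsf{COPY}\lstiC{;}-T_f$ with a \lstiC{NORMAL}/\lstiC{ROF}-based \lstiC{COPY}, and the composition $R_f=B[T_f]\lstiC{;}-B[T_{f^{-1}}]$, exactly as in Fig.~\ref{fig:two-way-proof-scheme}. Your explicit remarks on disjoint register renaming and on $-T_f$ not touching the copy target only make more precise the assumptions the paper states as ``without loss of generality''.
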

\begin{proof}
\emph{``If direction''.}
Let $T_f \in \ForNo$ computes $f$ with zero-garbage. Therefore, due to the reversibility, $-T_{f}$ computes $f^{-1}$.
Thm.~\ref{theorem:forno_fptime_sound} implies that $f, f^{-1}$ are in \FPTIME.
Moreover, $f \in \honest$ because:
(i) by definition, $\forall y, |f^{-1}(y)| \leq p(|y|)$, for some polynomial $p$,
otherwise $f^{-1}$ would not be in \FPTIME;
(ii) since $f$ is a bijection, replacing $f(x)$  for $y$ we get $\forall x, |f^{-1}(f(x))| \leq p(|f(x)|)$, meaning that $\forall x, |x| \leq p(|f(x)|)$.
Of course, $f^{-1} \in \honest$ for similar reasons.
\par
\emph{``Only if direction''.}
We read Fig. \ref{fig:two-way-proof-scheme} to justify why it holds.
Since $f$ is a \twowayb both $f, f^{-1} \in \FPTIME$.
By Corollary.~\ref{theorem:completenessPTM short} both $T_f$ and $T_{f^{-1}}$ exist in \ForNo that compute $f$ and $f^{-1}$, respectively. Without loss of generality we assume that
(i) \lstiC{x} is the input/output register of $T_f$;
(ii) \lstiC{fx} is the input/output register of $T_{f^{-1}}$;
(iii) \lstiC{fx} is not in $T_f$ and \lstiC{x} is not in $T_{f^{-1}}$; (iv) $T_f$ and $T_{f^{-1}}$ use the same encoding function $\llangle \cdot \rrangle$. Then both
$\langle T_f, (\varnothing[\lstiC{x} \to \llangle x \rrangle],0) \rangle \Downarrow \tau$ such that $\tau(\lstiC{x}) = \llangle f(x) \rrangle$ and
$\langle T_{f^{-1}}, (\varnothing[\lstiC{fx} \to \llangle f(x) \rrangle],0) \rangle \Downarrow \omega$ such that $\omega(\lstiC{fx}) = \llangle x \rrangle$, where both $\tau$ and $\omega$ are sound state.
\par
By using the ``Bennett trick'' we define $B[T_f] = T_f \lstiC{;COPY(x;fx);}-T_f$ and $B[T_{f^{-1}}] = T_{f^{-1}} \lstiC{;COPY(fx;x);}-T_{f^{-1}}$ such that both
$\langle B[T_f] , (\varnothing[\lstiC{x} \to \llangle x \rrangle],0)\rangle \Downarrow (\varnothing[\lstiC{x} \to \llangle x \rrangle,\lstiC{fx} \to \llangle f(x) \rrangle],0)$ and
$\langle B[T_{f^{-1}}] , (\varnothing[\lstiC{fx} \to \llangle f(x) \rrangle],0)\rangle \Downarrow (\varnothing[\lstiC{fx} \to \llangle f(x) \rrangle, \lstiC{x} \to \llangle x \rrangle],0)$.
The term $\lstiC{COPY}(x,y)$ which copies $x$ inside $y$, is  defined as $\lstiC{NORMAL} \ x \ \lstiC{\{ROF} \ x \ \lstiC{\{PUSH[}\llangle a_0 \rrangle\lstiC{]} \ y \lstiC{\}\}} \ldots \lstiC{\{PUSH[}\llangle a_n \rrangle\lstiC{]} \ y \lstiC{\}}$, supposing $\Sigma = \{a_0,\ldots, a_n\}$.
\par
Finally, by sequentially composing $B[T_f]$ and the inverse of $B[T_{f^{-1}}]$ we obtain the term $R_f = B[T_f]\lstiC{;}-B[T_{f^{-1}}]$ s.t.: $\langle R_f , (\varnothing[\lstiC{x} \to \llangle x \rrangle],0)\rangle \Downarrow (\varnothing[\lstiC{fx} \to \llangle f(x) \rrangle],0)$, which by definition \emph{computes $f$ with zero-garbage}.
\end{proof}


\begin{thebibliography}{1}

\bibitem{arora2009computational}
Sanjeev Arora and Boaz Barak.
\newblock Computational complexity: A modern approach.
\newblock {\em Cambridge University Press}, page Chapter 6, 2009.

\bibitem{Axelsen2011WhatDR}
H.~B. Axelsen and R.~Gl{\"u}ck.
\newblock {What Do Reversible Programs Compute?}
\newblock In {\em Foundations of Software Science and Computation Structure},
  2011.

\bibitem{DalLago_2022}
U.~Dal~Lago.
\newblock {Implicit computation complexity in higher-order programming
  languages: A Survey in Memory of Martin Hofmann}.
\newblock {\em MSCS}, 32(6):760–776, 2022.

\bibitem{goldreich2008foundations}
Oded Goldreich.
\newblock Foundations of cryptography: Volume 2, basic applications.
\newblock {\em Cambridge University Press}, page Chapter 2, 2008.

\bibitem{TheComplexityTheoryCompanion:02}
L.~A. Hemaspaandra and M.~Ogihara.
\newblock {\em The Complexity Theory Companion}.
\newblock Texts in Theoretical Computer Science. An {EATCS} Series. Springer,
  2002.

\bibitem{DBLP:journals/jcss/KasaiA80}
T.~Kasai and A.~Adachi.
\newblock {A Characterization of Time Complexity by Simple Loop Programs}.
\newblock {\em J. Comput. Syst. Sci.}, 20(1):1--17, 1980.

\bibitem{DBLP:journals/scp/Kristiansen22}
L.~Kristiansen.
\newblock {Reversible Computing and Implicit Computational Complexity}.
\newblock {\em Sci. Comput. Program.}, 213:102723, 2022.

\bibitem{Kristiansen2004OnTC}
L.~Kristiansen and K.-H. Niggl.
\newblock {On the Computational Complexity of Imperative Programming
  Languages}.
\newblock {\em Theor. Comput. Sci.}, 318:139--161, 2004.

\bibitem{robshaw2011oneway}
Matthew J.~B. Robshaw.
\newblock One-way function.
\newblock In H.~C.~A. Tilborg and S.~Jajodia, editors, {\em Encyclopedia of
  Cryptography and Security}, pages 1731--1732. Springer, 2011.

\end{thebibliography}

\end{document}